\date{January 22, 2013}
\newtheorem{lem}{Lemma}[section]
\newtheorem{prop}{Proposition}[section]
\newtheorem{thm}{Theorem}[section]
\theoremstyle{definition}
\theoremstyle{remark}
\theoremstyle{remark}
\newtheorem{remark}{Remark}[section]
\numberwithin{equation}{section}
\newcommand{\R}{{\mathbb R}}
\definecolor{blu}{rgb}{0,0,1}
\begin{document}

\title[Existence of ground states at the binding threshold]{Existence of ground states for negative ions\\ at the binding threshold}

\author[J. Bellazzini]{Jacopo Bellazzini}
\address{J. Bellazzini, Universit\`a di Sassari, Via Piandanna 4, 07100 Sassari, Italy}
\email{jbellazzini@uniss.it}

\author[R. L. Frank]{Rupert L. Frank}
\address{R. L. Frank, Department of Mathematics, Princeton University, Princeton, NJ 08544, USA}
\email{rlfrank@math.princeton.edu}

\author[E. H. Lieb]{Elliott H. Lieb}
\address{E. H. Lieb, Departments of Mathematics and Physics, Princeton University, Princeton, NJ 08544, USA}
\email{lieb@princeton.edu}

\author[R. Seiringer]{Robert Seiringer}
\address{R. Seiringer, Department of Mathematics and Statistics, McGill University, 805 Sherbrooke Street West, Montreal, QC H3A 2K6, Canada} 
\email{robert.seiringer@mcgill.ca}

\thanks{\copyright 2013 by the authors. This paper may be reproduced, in its entirety, for non-commercial purposes.\\
Partial financial support from PRIN 2009 `Metodi Variazionali e Topologici nello Studio di Fenomeni non
Lineari' (J.B.), the U.S. National Science Foundation through grants PHY-1068285 (R.F.), PHY-0965859 (E.L.), the Simons Foundation (\# 230207, E.L.) and the NSERC (R.S.) is acknowledged.}

\begin{abstract}
As the nuclear charge $Z$ is continuously decreased an $N$-electron atom undergoes a binding-unbinding transition at some critical $Z_c$. We investigate whether the electrons remain bound when $Z=Z_c$ and whether the radius of the system stays finite as $Z_c$ is approached. Existence of a ground state at $Z_c$ is shown under the condition $Z_c<N-K$, where $K$ is the maximal number of electrons that can be removed at $Z_c$ without changing the ground state energy.
\end{abstract}

\maketitle

\section{Introduction and main result}

The energy of a quantum-mechanical system composed of $N$ electrons and one fixed nucleus of charge $Z>0$ is described by the Hamiltonian
\begin{equation}
\label{eq:hamintro}
\sum_{i=1}^N \left( p_i^2 -\frac{Z}{|x_i|} \right) +\sum_{i<j}\frac{1}{|x_i-x_j|} \,.
\end{equation}
(Here we use units in which the electron mass $m=1/2$, the electron charge $e=-1$ and Planck's constant $\hbar=1$.) The terms $p_i^2$, where $p_i=-{\rm i}\nabla_i$, and $-Z/|x_i|$ describe the kinetic energy of the $i$-th electron and its potential energy due to the attraction to the nucleus, respectively. The term $|x_i-x_j|^{-1}$ stands for the potential energy due to the repulsion between the $i$-th and the $j$-th electron. The Pauli principle dictates that the Hamiltonian is considered as acting in the subspace $L_a^2(\R^{3N})$ of \emph{anti-symmetric} functions in $L^2(\R^{3N})$, that is, $\psi$'s satisfying
$$
\psi(\ldots,x_i,\ldots,x_j,\ldots) = - \psi(\ldots,x_j,\ldots,x_i,\ldots)
\qquad\text{for}\ i\neq j \,. 
$$
(For the sake of simplicity, we ignore the electron spin. It can be included easily.)

The ground state energy of the system is given by the bottom of the spectrum of the Hamiltonian \eqref{eq:hamintro}. If this bottom of the spectrum is an eigenvalue, that is, if there is an eigenfunction in $L_a^2(\R^{3N})$, then the system is said to be bound and the eigenfunction describes its ground state.

It is intuitively clear that a nucleus of charge $Z$ can bind $N$ electrons if $Z$ is large compared to $N$, and that it cannot if $Z$ is small compared to $N$. This fact can also be shown mathematically: Zhislin \cite{Zh} proved that the system is bound if $Z> N-1$ and Nam \cite{Na} showed that the system is not bound if $N\geq 1.22 Z + 3Z^{1/3}$, improving the earlier condition $N\geq 2Z+1$ of \cite{Li}. For asymptotic results as $Z\to\infty$, see, for instance, \cite{Ru,Si,LiSiSiTh,SeSiSo}. In these results, and also in our paper, we shall consider $Z$ as an arbitrary positive (not necessarily integer) parameter. Then the cited results imply that for fixed $N$ there is (at least one) critical value of $Z$ where a binding-unbinding transition occurs.

In this paper we are interested in this binding-unbinding transition. More precisely, we investigate whether the system is bound at this critical $Z$ value, that is, whether a ground state exists. Intuitively, this question is related to the size of the system. As the continuous parameter $Z$ moves from the binding regime across the critical value into the unbinding regime, there are two possible scenarios: Either the size of the system increases indefinitely and becomes infinite as $Z$ reaches the critical value $Z_c$, or else the size of the system approaches a finite value at the critical value and then jumps discontinuously to infinity. The first scenario corresponds to the case where no ground state exists at the critical value, and the second one to where it does exist.

This question has been discussed in the physics literature (see, e.g., \cite{St}) and it was proved by Th. and M. Hoffmann-Ostenhof and Simon \cite{HOS} that for two-electron atoms in the spin singlet state the second scenario occurs, that is, there is a ground state at the critical coupling value. This corresponds to the case $N=2$, but \emph{without} the anti-symmetry assumption. On the other hand, Th. and M. Hoffmann-Ostenhof \cite{HO} showed that in the triplet $S$-sector the first scenario occurs. This corresponds to the $N=2$ case \emph{with} anti-symmetry, but the admissible functions are further restricted to depend only on $|x_1|$, $|x_2|$ and $x_1\cdot x_2$. Very recently Gridnev \cite{G} has, among other things, generalized the Hoffmann-Ostenhof--Simon existence result to arbitrary $N$, but with the additional assumption that $Z_c$ lies in the interval $(N-2,N-1)$. He has also conjectured that the assumption of a lower bound $N-2$ on the critical $Z$ is not necessary.

Our goal here in this paper is to reprove Gridnev's result by completely different means and to replace his assumption by a weaker one which we believe to be optimal. (In contrast to \cite{G}, however, we only consider an infinitely heavy nucleus.) Our method extends the one introduced in \cite{FLS}, where an alternative proof of the Hoffmann-Ostenhof--Simon existence result was given. In contrast to \cite{HOS} no positivity of the ground state is needed and the proof in \cite{FLS}, as well as the proof in this paper, extend to the case where the particles move in an external magnetic field, for instance.

We proceed to formulate our results precisely. It is physically equivalent and mathematically convenient to rescale the $x_i$ in \eqref{eq:hamintro} by $U=1/Z$. We then find that, apart from an overall factor of $Z^2$, the Hamiltonian \eqref{eq:hamintro} is unitarily equivalent to the Hamiltonian
\begin{equation}
H_U^{(N)}:=\sum_{i=1}^N \left( p_i^2 -\frac{1}{|x_i|} \right) +\sum_{i<j}\frac{U}{|x_i-x_j|} \,.
\end{equation}
We consider $H_U^{(N)}$ as a self-adjoint operator in the Hilbert space $L_a^2(\R^{3N})$ of anti-symmetric functions and denote its ground state energy by
$$
E_U^{(N)}= \text{ inf spec } H_U^{(N)}=\inf_{\|\psi\|=1} \langle \psi | H_U^{(N)} | \psi \rangle \,.
$$
The ground state energy $E_U^{(N)}$ of $H_U^{(N)}$ is a non-decreasing, concave function of $U$ and one has the ordering
$$
E_U^{(N)} \leq E_U^{(N-1)} \leq E_U^{(N-2)} \leq \ldots
$$
with respect to $N$. We define $\mathcal U^{(N)}=\{ U>0 :\ E_U^{(N)} < E_U^{(N-1)} \}$. The set of critical coupling constants is given by
\begin{align*}
\label{eq:critcoup}
\mathcal U^{(N)}_c & = \partial \mathcal U^{(N)} \\
& = \left\{ U>0 :\ E_U^{(N)} = E_U^{(N-1)}\ \text{and } E_{U_n}^{(N)} < E_{U_n}^{(N-1)} \text{ for some } U_n \to U \right\} \,.
\end{align*}
As mentioned before, by Zhislin's theorem $E_U^{(N)} < E_U^{(N-1)}$ for $U<1/(N-1)$ and, for instance by \cite{Li}, $E_U^{(N)} = E_U^{(N-1)}$ for all large $U$. Thus, $\mathcal U^{(N)}_c$ is non-empty. It is natural to believe that $\mathcal U^{(N)}$ is a single interval and that $\mathcal U^{(N)}_c$ contains only one element, but we do not know how to prove this.

According to the HVZ theorem (see, e.g., \cite{CyFrKiSi}) the strict inequality $E_U^{(N)} < E_U^{(N-1)}$ implies that $E_U^{(N)}$ is an eigenvalue of $H^{(N)}_U$ and, consequently, a ground state exists. On the other hand, it is clear that if $U\in\R_+\setminus\left( \mathcal U^{(N)} \cup \mathcal U^{(N)}_c \right)$, then $E_U^{(N)}$ is \emph{not} an eigenvalue of $H^{(N)}_U$.

The following theorem is our main result. It gives a sufficient condition for $E_{U_c}^{(N)}$ to be an eigenvalue of $H^{(N)}_{U_c}$ for $U_c\in\mathcal U^{(N)}_c$. The sufficient condition (for fixed $U_c\in\mathcal U^{(N)}_c$) depends on an integer $1\leq K\leq N-1$, which is the largest integer such that $E_{U_c}^{(N)}=E_{U_c}^{(N-K)}$. Thus, we have
$$
E_{U_c}^{(N)} = E_{U_c}^{(N-1)} =\ldots = E_{U_c}^{(N-K)} < E_{U_c}^{(N-K-1)} \,,
$$
where we interpret $E_{U_c}^{(0)}=0$. Physically, $K$ denotes the maximal number of electrons that can be removed from the system at $U=U_c$ without changing the energy.

Our main result is as follows.

\begin{thm}[Binding for $N$ electrons at threshold]\label{binding}
Let $U_c\in\mathcal U^{(N)}_c$ and assume that $U_c>\frac{1}{N-K}$, where $K$ is the largest integer such that $E_{U_c}^{(N)}=E_{U_c}^{(N-K)}$. Then $H_{U_c}^{(N)}$ has a ground state eigenfunction $0\not\equiv\psi_{U_c} \in L_a^2(\R^{3N})$.
\end{thm}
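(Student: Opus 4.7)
The plan is to realize the desired ground state as a weak $H^1$-limit of ground states $\psi_n$ at sub-critical couplings $U_n\to U_c$, $U_n\in\mathcal U^{(N)}$, and to show that the hypothesis $U_c>1/(N-K)$ prevents the weak limit from vanishing. This extends the strategy of \cite{FLS}, which avoids positivity arguments and works in the anti-symmetric setting.

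\emph{Step 1: compactness and passage to the limit.} A sequence $U_n\to U_c$ with $U_n\in\mathcal U^{(N)}$ exists by the very definition of $\mathcal U^{(N)}_c$; the HVZ theorem supplies normalized anti-symmetric ground states $\psi_n$ of $H^{(N)}_{U_n}$, and the continuity (in fact Lipschitz continuity, by concavity) of $U\mapsto E^{(N)}_U$ gives $E_n:=E^{(N)}_{U_n}\to E_c:=E^{(N)}_{U_c}$. Kato's form-bound of $1/|x|$ by $-\Delta$ yields a uniform $H^1(\R^{3N})$-bound on $\psi_n$, so along a subsequence $\psi_n\rightharpoonup\psi$ weakly and pointwise a.e.; anti-symmetry passes to the limit. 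Weak $H^1$-convergence together with pointwise a.e.\ convergence and uniform $L^2$-domination allows passing to the limit in $H^{(N)}_{U_n}\psi_n=E_n\psi_n$ to obtain $H^{(N)}_{U_c}\psi=E_c\psi$ in the weak sense. It therefore suffices to show $\psi\not\equiv 0$.

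\emph{Step 2: cluster decomposition under the hypothesis.} Argue by contradiction: assume $\psi\equiv 0$. Introduce a Ruelle--Simon-type partition of unity $\{J_A^R\}_{A\subseteq\{1,\dots,N\}}$ at scale $R\gg 1$, with $J_A^R$ localizing configurations in which the electrons labelled by $A$ lie within distance $\sim R$ of the nucleus and those in $A^c$ lie beyond distance $\sim R$. The IMS identity gives
\begin{equation*}
0=\langle\psi_n,(H^{(N)}_{U_n}-E_n)\psi_n\rangle=\sum_A\langle J_A^R\psi_n,(H^{(N)}_{U_n}-E_n)J_A^R\psi_n\rangle-O(R^{-2}).
\end{equation*}
On the support of $J_A^R$ the Hamiltonian decouples, up to an intercluster Coulomb interaction whose leading multipole on each escaping electron $j\in A^c$ is the monopole $(|A|U_n-1)/|x_j|$. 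Splittings with $|A^c|=k>K$ contribute vanishing mass since then $E^{(N-k)}_{U_c}-E_c>0$ is a fixed positive gap. For $1\le k\le K$, the hypothesis $U_c>1/(N-K)$ yields $|A|U_c-1\ge(N-K)U_c-1>0$, so the escaping cluster sees a strictly \emph{repulsive} (rather than attractive) long-range Coulomb tail.

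\emph{Step 3: the crux.} The heart of the proof is to convert this monopole positivity into a quantitative strict lower bound
\begin{equation*}
\langle J_A^R\psi_n,(H^{(N)}_{U_n}-E_n)J_A^R\psi_n\rangle\ge\delta_R\|J_A^R\psi_n\|^2-o_n(1)
\end{equation*}
with $\delta_R$ dominating the IMS error $O(R^{-2})$ uniformly in $n$, for every $A$ with $1\le|A^c|\le K$; summing then gives $0\ge \delta_R-o_n(1)>0$ for $R,n$ large, the desired contradiction. This is delicate because the escaping cluster, seen in isolation, has infimum energy $0$ (infimum not attained) due to the purely repulsive effective Coulomb tail, so the pointwise positivity of the monopole is \emph{not} enough. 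The argument must exploit both the repulsive tail and the fact that the core is forced close to the genuine ground state of $H^{(N-K)}_{U_c}$ — whose existence is guaranteed by $E^{(N-K-1)}_{U_c}>E^{(N-K)}_{U_c}$ combined with HVZ — so that $J_A^R\psi_n$ essentially factorizes into that core state tensored with an escaping-cluster piece, and a many-body Hardy/Birman--Schwinger-type estimate for anti-symmetric functions in a positive Coulomb field controls the latter from below. This Hardy-type bound is the $N$-body anti-symmetric analogue of the one-body estimate used in \cite{FLS} for $N=2$ and is the principal technical hurdle of the proof; everything else is soft.
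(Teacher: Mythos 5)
Your Step 1 is fine and matches the soft part of the argument. The genuine gap is Step 3, and it is not merely an unproved technical lemma: the bound you want there is false. On the region where a cluster $A^c$ with $1\le |A^c|\le K$ escapes, the best available lower bound for $H^{(N)}_{U_n}-E_n$ is a repulsive Coulomb tail of the form $c/|x|_\infty$, whose infimum over that region is exactly $0$ and is not attained --- as you yourself observe. Consequently no constant $\delta_R>0$ dominating the $O(R^{-2})$ IMS error can exist: a trial state consisting of the $(N-k)$-particle ground state tensored with a wave packet for the escaping electrons at distance $L\gg R$ has localized energy $E_c+O(L^{-1})$. No Hardy- or Birman--Schwinger-type inequality produces a uniform positive gap for $p^2+c/|x|$ on $\{|x|>R\}$, so the contradiction scheme ``$0\ge\delta_R-O(R^{-2})$'' cannot be closed.

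The mechanism that actually works (and is the one used in the paper) keeps the Coulomb tail explicit instead of trying to convert it into a gap. One proves an operator inequality
$$
H^{(N)}_U-E^{(N)}_U\ \ge\ -\frac{C}{l^2}\,\chi_{\{|x|_\infty<l\}}+\left(E^{(N-1)}_U-E^{(N)}_U+\frac{c}{|x|_\infty}\right)\chi_{\{|x|_\infty\ge l\}}
$$
with constants uniform in $U$ and valid for \emph{all} $l\ge l_0$. Testing against the actual ground state $\psi_{U_n}$ gives $\frac{C}{l^2}\int_{\{|x|_\infty<l\}}|\psi_n|^2\ge c\int_{\{|x|_\infty\ge l\}}|x|_\infty^{-1}|\psi_n|^2$ for every scale $l\ge l_0$ simultaneously, and an elementary multi-scale ``calculus lemma'' converts this one-parameter family of inequalities into the uniform bound $\int_{\{|x|_\infty<l_0\}}|\psi_n|^2\ge m>0$, which is what keeps the weak limit from vanishing. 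Two further points your sketch glosses over but that the argument must handle: (i) for $2\le k\le K$ the escaping electrons have to be peeled off one at a time, using a partition of unity whose support conditions guarantee at each stage either that the outermost electron sees net repulsion at least $(1-2\epsilon)U(N-K)$, or that yet another electron can be removed, terminating only after $K+1$ removals where the gap $E^{(N-K-1)}_U-E^{(N-1)}_U\ge\theta$ finally becomes available; and (ii) the partition must respect permutation symmetry so that each localized piece remains anti-symmetric in the non-removed variables and the bounds $E^{(N-k)}_U$ on $L^2_a$ are legitimately applicable.
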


\begin{remark}
The theorem is also valid if the anti-symmetry assumption is dropped or, which is the same, replaced by a symmetry assumption. The proof in this unconstrained case follows along the same line and is actually somewhat simpler, see Remark \ref{symm}. The method also goes through essentially without modification in the case of spin.
\end{remark}

\begin{remark}
In the case $N=2$ and without the anti-symmetry assumption, the strict inequality $U_c>1$ is a classical result of Bethe \cite{Be}. Thus, (the symmetric version of) Theorem \ref{binding} extends the results of \cite{HOS} and \cite{FLS}. 
\end{remark}

\begin{remark}
In the case $N=2$ and with the anti-symmetry assumption enforced, one has equality $U_c=1$ and no ground state exists (at least no ground state depending only on $|x_1|$, $|x_2|$ and $x_1\cdot x_2$) \cite{HO}. This suggests that, in general, the assumption $U_c>\frac{1}{N-K}$ may not be dropped.
\end{remark}

\begin{remark}
Our theorem generalizes Gridnev's result (in the case of infinite nuclear mass). Indeed, Gridnev's assumption $U_c<\frac1{N-2}$ implies, by Zhislin's theorem, that $E_{U_c}^{(N-1)}< E_{U_c}^{(N-2)}$ and thus $K=1$. Gridnev's second assumption $U_c>\frac1{N-1}$ coincides with our assumption. We emphasize that the main difficulty that we overcome in this paper is the case $U_c\geq \frac1{N-2}$ or, more generally, the case where $K\geq 2$.
\end{remark}

%%%%%%%%%%%%%%%%%%%%%%%%%%%%%%%%%%%%%%%%%%%%

\section{Strategy of the proof}

A standard way to prove the existence of ground states at threshold is to  prove that the weak limit of a ground state sequence $\psi_{U_n}$ when $U_n \rightarrow U_c$ is not zero. Once one knows that the weak limit is non-zero it is easy to see that this weak limit has to be a ground state. In the spirit of \cite{FLS} we give an upper bound on the `radius' of $\psi_{U_n}$ which is independent on $U_n$. Hereafter we denote $|x|_{\infty}:=\max\{|x_1|,..., |x_N|\}$. We interpret this as the maximal distance of the electrons from the nucleus.

\begin{thm}[Uniform upper bound on the radius of the atom]\label{main}
For any $\delta,\theta>0$ and any $1\leq K\leq N-1$ there are constants $m,R>0$ such that for all $U\in\mathcal U^{(N)}$ satisfying
\begin{equation}
\label{eq:mainass}
U \geq \frac{1}{N-K}+\delta
\quad\text{and}\quad
E_U^{(N-1)} \leq E_U^{(N-K-1)} -\theta
\end{equation}
and for all normalized ground states $\psi_{U}$ of $H^{(N)}_U$ one has
\begin{equation}
\label{eq:main}
\langle \psi_{U} | |x|_{\infty}^{-1}| \psi_U \rangle \geq R^{-1}
\quad\text{and}\quad
\langle \psi_{U} | \chi_{\{|x|_{\infty}<R \}} | \psi_U \rangle \geq m \,.
\end{equation}
\end{thm}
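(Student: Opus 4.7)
The plan is to carry out an IMS cluster decomposition of the $N$-electron configuration space indexed by which subset $\mathcal A\subseteq\{1,\dots,N\}$ of electrons lies at distance $\geq R$ from the nucleus, and to show that the non-trivial clusters ($\mathcal A\neq\emptyset$) contribute strictly more than $E_U^{(N)}\|J_\mathcal A\psi_U\|^2$ by an amount uniform in $U$. Fix smooth $\rho,\sigma\colon\R_+\to[0,1]$ with $\rho^2+\sigma^2\equiv 1$, $\rho\equiv 1$ on $[0,1]$ and $\mathrm{supp}\,\rho\subset[0,2]$, and set
\begin{equation*}
J_\mathcal A(x)=\prod_{i\in\mathcal A}\sigma(|x_i|/R)\prod_{i\notin\mathcal A}\rho(|x_i|/R),
\end{equation*}
so that $\sum_\mathcal A J_\mathcal A^2\equiv 1$ and $\sum_\mathcal A|\nabla J_\mathcal A|^2\leq C/R^2$. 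The IMS localization formula combined with the eigenvalue equation $H_U^{(N)}\psi_U=E_U^{(N)}\psi_U$ yields the master inequality
\begin{equation*}
\sum_\mathcal A\langle J_\mathcal A\psi_U,\,(H_U^{(N)}-E_U^{(N)})J_\mathcal A\psi_U\rangle\leq C/R^2.
\end{equation*}

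For each $|\mathcal A|=j\geq 1$ I split $H_U^{(N)}=H^{\mathrm{in}}_{\overline{\mathcal A}}+H^{\mathrm{out}}_\mathcal A$, where $H^{\mathrm{in}}_{\overline{\mathcal A}}$ is the $(N-j)$-body atomic Hamiltonian in the inner variables $(x_i)_{i\notin\mathcal A}$ and $H^{\mathrm{out}}_\mathcal A$ gathers the kinetic energy of the outer electrons, their nuclear attraction, their mutual repulsion, and the interaction between inner and outer electrons. Antisymmetry of $\psi_U$ in the inner coordinates (preserved by $J_\mathcal A$) gives $\langle J_\mathcal A\psi_U, H^{\mathrm{in}}_{\overline{\mathcal A}}J_\mathcal A\psi_U\rangle\geq E_U^{(N-j)}\|J_\mathcal A\psi_U\|^2$. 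On $\mathrm{supp}\,J_\mathcal A$ we have $|x_i|\geq R$ for $i\in\mathcal A$ and $|x_{i'}|\leq 2R$ for $i'\notin\mathcal A$, hence $|x_i-x_{i'}|^{-1}\geq(|x_i|+2R)^{-1}$, and dropping the positive outer-outer repulsion yields
\begin{equation*}
H^{\mathrm{out}}_\mathcal A\geq\sum_{i\in\mathcal A}\Bigl[p_i^2-\frac{1}{|x_i|}+\frac{(N-j)U}{|x_i|+2R}\Bigr],
\end{equation*}
an effective one-body operator whose Coulomb tail at infinity is $((N-j)U-1)/|x_i|$.

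I then split into two regimes. If $1\leq j\leq K$, the defining property of $K$ gives $E_U^{(N-j)}=E_U^{(N)}$, while the hypothesis $U\geq(N-K)^{-1}+\delta$ forces $(N-j)U-1\geq(N-K)\delta>0$, so each outer electron sees a uniformly repulsive Coulomb tail. If $j\geq K+1$, the second assumption in \eqref{eq:mainass} supplies the energy gap $E_U^{(N-j)}-E_U^{(N)}\geq E_U^{(N-K-1)}-E_U^{(N-1)}\geq\theta$. Plugging these bounds into the master inequality, the $j\geq K+1$ clusters contribute at least $\theta\|J_\mathcal A\psi_U\|^2$, while the $1\leq j\leq K$ clusters contribute the nonnegative expectation of the effective outer operator above. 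The remaining task is to bound the latter below by $c\|J_\mathcal A\psi_U\|^2$ with $c=c(\delta,N,K,R)>0$; once this is achieved, choosing $R$ large (depending only on $\delta,\theta,N,K$) forces $\sum_{\mathcal A\neq\emptyset}\|J_\mathcal A\psi_U\|^2\leq 1/2$, whence $\langle\psi_U|\chi_{\{|x|_\infty<2R\}}|\psi_U\rangle\geq 1/2$ and both estimates in \eqref{eq:main} follow (the first via $\langle\psi_U||x|_\infty^{-1}|\psi_U\rangle\geq(2R)^{-1}\langle\psi_U|\chi_{\{|x|_\infty<2R\}}|\psi_U\rangle$).

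The main obstacle is precisely this last one-body lower bound in the range $1\leq j\leq K$: since the free operator $p^2+c_0/|x|$ with $c_0>0$ still has $\inf\sigma=0$, the positive contribution cannot come from the pointwise Coulomb tail alone. I would extract it from the combined effect of the Dirichlet-type support restriction $|x_i|\geq R$ and the repulsive tail via a Hardy-type ground-state substitution $\varphi=|x|^{-a}u$ with a small exponent $a>0$, using the positivity of $(N-j)U-1$ to dominate the sign-indefinite cross terms produced by the substitution. Controlling several outer electrons simultaneously, rather than just one, is the new ingredient beyond \cite{FLS} and Gridnev's $K=1$ case in \cite{G}, where only $(N-1)$-electron clusters must be handled.
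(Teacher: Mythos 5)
Your overall architecture (IMS cluster decomposition, gap $\theta$ for large clusters, repulsive effective tail for small clusters) is in the right spirit, but two of its steps fail, and they are precisely the two difficulties the paper's proof is built to overcome. First, the bound $|x_i-x_{i'}|^{-1}\geq(|x_i|+2R)^{-1}$ is too lossy. Your effective potential $V_{\rm eff}(r)=-r^{-1}+(N-j)U(r+2R)^{-1}=\bigl(((N-j)U-1)r-2R\bigr)/\bigl(r(r+2R)\bigr)$ is \emph{negative} on the annulus $R\leq r\leq 2R/((N-j)U-1)$, whose width is of order $R/\delta$, with depth of order $1/R$. A well of depth $\sim R^{-1}$ and width $\sim R/\delta$ binds (depth times width squared $\sim R/\delta^{2}\gg1$), so $p_i^{2}+V_{\rm eff}$ restricted to $\{|x_i|\geq R\}$ has ground state energy of order $-1/R$, not $\geq0$. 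The $1\leq j\leq K$ clusters therefore contribute an error of order $R^{-1}$, which swamps the localization error $R^{-2}$, and your master inequality gives no information about exactly the clusters that matter (there is no spectral gap for them: note also that $E_U^{(N-j)}=E_U^{(N)}$ is false for $U\in\mathcal U^{(N)}$; one only has $E_U^{(N-j)}\geq E_U^{(N)}$ with a gap that degenerates as $U\to U_c$). The paper's partition of unity (Proposition \ref{partition}) is designed to avoid this: the outer region is split according to whether the \emph{actual} repulsion $\sum_{m\neq j}|x_m-x_j|^{-1}$ screens the nucleus. Where it does, the net tail is pointwise $\geq(\delta/2)/|x_j|$ with no intermediate negative annulus; where it does not, the far-out electron is peeled off entirely (the discarded attraction being $O(|x|_\infty^{-1})$ by Lemma \ref{farout}), and one descends to the next cluster, iterating up to $K+1$ times until the gap $\theta$ becomes available. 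This screened/unscreened dichotomy, not a near/far dichotomy, is the essential missing idea.

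Second, even granting a genuinely repulsive tail, the lower bound $c\|J_{\mathcal A}\psi_U\|^{2}$ with $c>0$ that your plan requires does not exist: $\inf\mathrm{spec}\bigl(p^{2}+c_0|x|^{-1}\bigr)=0$ even on $\{|x|\geq R\}$, and a Hardy-type substitution can only produce a \emph{weighted} lower bound of the form $c\langle\varphi,|x|^{-1}\varphi\rangle$. That is in fact what the paper proves (Lemma \ref{opineq}: $H_U^{(N)}-E_U^{(N)}\geq -Cl^{-2}\chi_{\{|x|_\infty<l\}}+c|x|_\infty^{-1}\chi_{\{|x|_\infty\geq l\}}$ for all $l\geq l_0$), but one then needs a separate mechanism to convert this one-parameter family of inequalities into the two conclusions \eqref{eq:main}. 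That mechanism is the calculus lemma (Lemma \ref{calc}, from \cite{FLS}) applied to the radial distribution of $|\psi_U|^{2}$ in the variable $|x|_\infty$; it is absent from your proposal, and without it the final step ``choose $R$ large so that $\sum_{\mathcal A\neq\emptyset}\|J_{\mathcal A}\psi_U\|^{2}\leq1/2$'' does not follow from anything you have established.
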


This theorem immediately implies Theorem \ref{binding}. Indeed, let $U_c\in\mathcal U^{(N)}_c$, let $K$ be the largest integer such that $E_{U_c}^{(N)}=E_{U_c}^{(N-K)}$ and assume that $U_c>\frac{1}{N-K}$. Then there is a non-empty subset of $\mathcal U^{(N)}$, containing $U_c$ in its closure, with elements satisfying \eqref{eq:mainass} for $$
\delta= \frac12\left(U_c - \frac1{N-K}\right)>0
\quad\text{and}\quad
\theta = \frac12\left(E_{U_c}^{(N-K-1)}- E_{U_c}^{(N-1)}\right)>0 \,.
$$
Then Theorem \ref{main} implies that any sequence $\psi_{U_n}$ of normalized ground states of $H^{(N)}_{U_n}$ with $U_n$ from this subset and with $U_n\to U_c$ satisfies \eqref{eq:main}. Since $\psi_{U_n}$ is bounded in $H^1(\R^{3N})$ it has a weak limit in this space which, according to \eqref{eq:main} is not identically zero. One easily verifies that the weak limit is a ground state of $H^{(N)}_{U_c}$, thus proving Theorem \ref{binding}. This observation reduces the proof of Theorem \ref{binding} to the proof of Theorem \ref{main}.

Theorem \ref{main}, in turn, can be deduced from two lemmas that we state next. The first one, and this is the novelty of this paper, is an operator inequality for $H_U^{(N)}$. It estimates $H^{(N)}_U-E^{(N)}_U$ from below by a potential well which is attractive of order $l^{-2}$ for $|x|_\infty<l$, but has a repulsive Coulomb tail for $|x|_\infty>l$. The parameter $l$ can be chosen arbitrary large and the constants are uniform in $U$. The precise statement is

\begin{lem}[Operator inequality]\label{opineq}
For any $\delta,\theta>0$ and $1\leq K\leq N-1$ there are three constants $c,C,l_0>0$ such that for all $U\in\mathcal U^{(N)}$ satisfying \eqref{eq:mainass} and all $l\geq l_0$ one has
\begin{equation}\label{operineq}
H^{(N)}_U-E^{(N)}_U\geq -\frac{C}{l^2} \chi_{\{|x|_{\infty}<l\}} + \left( E_U^{(N-1)}-E^{(N)}_U + \frac{c}{|x|_{\infty}}\right) \chi_{\{|x|_{\infty}\geq l\}} \,.
\end{equation}
\end{lem}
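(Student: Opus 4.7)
The proof is an IMS localization argument in $|x|_\infty$, refined over a hierarchy of scales. I would begin with smooth cutoffs $\chi_0^2 + \chi_1^2 = 1$ depending only on $|x|_\infty$, with $\chi_0 \equiv 1$ on $\{|x|_\infty \leq l/2\}$, $\chi_0 \equiv 0$ on $\{|x|_\infty \geq l\}$, and $|\nabla \chi_j|^2 \leq C/l^2$. The IMS formula together with $\chi_0 H_U^{(N)} \chi_0 \geq E_U^{(N)} \chi_0^2$ and $E_U^{(N-1)} - E_U^{(N)} \geq 0$ yields the inner $-(C/l^2)\chi_{\{|x|_\infty<l\}}$ contribution, reducing the task to establishing, on $\mathrm{supp}\,\chi_1$,
\begin{equation*}
H_U^{(N)} \;\geq\; E_U^{(N-1)} + \frac{c}{|x|_\infty}.
\end{equation*}

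On $\mathrm{supp}\,\chi_1$ I would impose two further partitions of unity: first $\sum_{i=1}^N \eta_i^2 = 1$ localizing on the identity of the farthest electron, so that $|x_i|$ is comparable to $|x|_\infty$ on $\mathrm{supp}\,\eta_i$; second, a classification by the subset $F \ni N$ of ``far'' electrons, at auxiliary scales $l_1 \ll l_2 \ll \cdots$ chosen as small multiples of $|x_N|$ (depending on $\delta$). By permutation symmetry I work on $\mathrm{supp}\,\eta_N$, and on each stratum where $|F|=m$ I apply the peeling identity
\begin{equation*}
H_U^{(N)} = H_U^{(N-m)}(\{x_j\}_{j\notin F}) + \sum_{i\in F}\!\left(p_i^2 - \frac{1}{|x_i|}\right) + U\!\!\!\sum_{i\in F,\,j\notin F}\!\frac{1}{|x_i-x_j|} + U\!\!\!\!\sum_{\substack{i,i'\in F\\ i<i'}}\!\frac{1}{|x_i-x_{i'}|},
\end{equation*}
together with $H_U^{(N-m)} \geq E_U^{(N-m)}$.

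For $1 \leq m \leq K$ one uses $E_U^{(N-m)} = E_U^{(N)} \geq E_U^{(N-1)}$; the close electrons $j\notin F$ (which have $|x_j|$ much smaller than each $|x_i|$ with $i\in F$, by the relative choice of scales) screen each far $i$ to an effective one-body potential at least $((N-m)U/(1+\epsilon) - 1)/|x_i|$. The assumption $(N-m)U\geq (N-K)U\geq 1+(N-K)\delta$, together with $\epsilon$ chosen below a fixed multiple of $(N-K)\delta$, produces a strictly positive term $\geq c/|x_i|$ for each $i\in F$; since the maximum $|x_i|$ in $F$ is comparable to $|x|_\infty$ this yields the required Coulomb tail. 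For $m\geq K+1$ I instead peel off only $K+1$ electrons from $F$ and use $E_U^{(N-K-1)} \geq E_U^{(N-1)} + \theta$: for $l_0$ large enough, $\theta \geq c/|x|_\infty$ throughout $\{|x|_\infty\geq l_0\}$, so the energy gap alone supplies the tail.

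The principal obstacle is the multi-scale IMS bookkeeping: I expect the partitions to require a full hierarchy of scales $l_1\ll l_2\ll\cdots\ll l_K$ organized so that an electron at shell $k$ is much farther than electrons at shells $<k$, each partition carrying its own IMS error. Arranging these scales so that every IMS error is absorbable into the gain $c/|x|_\infty$ in the outer region, while simultaneously ensuring that the screening estimate $(N-m)U/(1+\epsilon_m) - 1 > 0$ holds at every stratum (the real obstacle when $K \geq 2$, where naively screening by only $N-K$ close electrons is barely sufficient), is the technical heart of the lemma. The constants $c$, $C$, $l_0$ emerge from this chain of absorptions and depend only on $N$, $K$, $\delta$, $\theta$.
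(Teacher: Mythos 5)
Your proposal is correct in outline and rests on the same two mechanisms as the paper's proof --- a net positive Coulomb term when at most $K$ electrons are far out, and the gap $\theta$ between $E_U^{(N-K-1)}$ and $E_U^{(N-1)}$ when at least $K+1$ are --- but it organizes the partition of unity in a genuinely different way. You stratify the outer region purely geometrically, by the subset $F$ of electrons beyond a configuration-dependent gap scale, which is what forces the multi-scale pigeonhole $l_1\ll\cdots\ll l_K$ that you rightly identify as the technical heart. The paper instead peels electrons one at a time (always a near-farthest one among those remaining) and splits according to the value of the dimensionless screening ratio $\frac{|x_{J_n}|}{N-n}\sum_{m\neq J_1,\ldots,J_n}|x_m-x_{J_n}|^{-1}$ itself: if it exceeds $1-2\epsilon$, the repulsion beats the attraction for that electron directly, via $\frac{U(1-2\epsilon)(N-K)-1}{|x_j|}\geq\frac{\delta/2}{|x|_\infty}$, with no far/close geometry needed; if it is below $1-\epsilon$, Lemma \ref{imp2} shows the remaining electrons are \emph{not} much closer in, so after $K+1$ peelings all peeled radii are comparable to $|x|_\infty$ (Lemma \ref{farout}) and the loss $-\sum_n|x_{J_n}|^{-1}\geq-\mathrm{const}\cdot|x|_\infty^{-1}$ is absorbed by $\theta$. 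That observation --- poor screening \emph{implies} the geometric separation you would otherwise have to impose --- is what lets the paper run the whole construction at the single scale $l$ and avoid your hierarchy of scales; your route is viable but buys nothing in exchange for the extra bookkeeping. Two points to repair in your write-up: the chain $E_U^{(N-m)}=E_U^{(N)}\geq E_U^{(N-1)}$ is wrong as stated (for $U\in\mathcal U^{(N)}$ one has $E_U^{(N)}<E_U^{(N-1)}$); what you need is simply the monotonicity $E_U^{(N-m)}\geq E_U^{(N-1)}$ for $m\geq1$. And your cutoffs must be invariant under permutations fixing $F$ (and fixing the label of the farthest electron), so that each localized piece stays antisymmetric in the unpeeled variables; this is what legitimizes $H_U^{(N-m)}\geq E_U^{(N-m)}$ with the \emph{fermionic} ground-state energy, and it is precisely why the paper symmetrizes its refined partition (see Remark \ref{symm}).
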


The second ingredient in the proof of Theorem \ref{main} is the following elementary lemma, which we cite from \cite{FLS}.

\begin{lem}[Calculus lemma]\label{calc}
Let $\rho\in L^1(\R^+)$ be non-negative with $\int_0^{\infty} \rho(r)dr =1$. Assume that there are constants $b>0$ and $l_0\geq 0$ such that
\begin{equation}
\frac{b}{l^2}\int_0^l \rho(r)dr\geq \int_l^{\infty} \frac{\rho(r)}{r}dr
\end{equation}
for all $l\geq l_0$. Then 
\begin{equation}
\int_0^{\infty}  \frac{\rho(r)}{r}dr\geq \frac{1}{2(b+l_0)}
\end{equation}
and
\begin{equation}
\int_0^{l_0} \rho(r)dr\geq \frac{l_0^2}{(b+l_0)^2} \,.
\end{equation}
\end{lem}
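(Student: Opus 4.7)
My plan is to write the hypothesis in terms of the cumulative distribution $F(l) := \int_0^l \rho(r)\,dr$ and the tail $G(l) := \int_l^\infty \rho(r)/r\,dr$, and to bootstrap the assumed inequality $G(l) \leq (b/l^2) F(l)$ into a pointwise lower bound on $F$. The starting point is the exact identity
$$1 - F(l) = l\, G(l) + \int_l^\infty G(r)\, dr,$$
valid for every $l \geq 0$ (a short Fubini computation, or integration by parts using $\rho(r) = -r G'(r)$; the boundary term at infinity vanishes because $r G(r) \leq b/r$). Substituting the hypothesis $G(r) \leq (b/r^2) F(r)$ into the tail integral and integrating by parts once gives
$$\int_l^\infty \frac{b F(r)}{r^2}\,dr = \frac{b F(l)}{l} + b\, G(l),$$
and applying the hypothesis again to the remaining $G(l)$ terms yields
$$1 - F(l) \leq F(l) \cdot \frac{b(2l+b)}{l^2},$$
which rearranges to the clean pointwise bound $F(l) \geq l^2/(l+b)^2$ for every $l \geq l_0$. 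Setting $l = l_0$ proves the second claim of the lemma.

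For the first claim, I use the elementary identity $1/r = \int_r^\infty s^{-2}\,ds$ together with Fubini to rewrite
$$\int_0^\infty \frac{\rho(r)}{r}\,dr = \int_0^\infty \frac{F(s)}{s^2}\,ds.$$
Discarding the part of the integral with $s < l_0$, where we have no information on $F$, and inserting the just-established lower bound on $F$ gives
$$\int_0^\infty \frac{\rho(r)}{r}\,dr \geq \int_{l_0}^\infty \frac{ds}{(s+b)^2} = \frac{1}{l_0+b} \geq \frac{1}{2(l_0+b)},$$
which is the first claim, with a factor of $2$ to spare.

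The only delicate point I anticipate is the bootstrap step: the hypothesis provides only a pointwise bound of $G$ by $F$ at the same scale, so one has to apply it twice --- once inside the tail integral and once on the boundary term $l\, G(l)$ --- together with the Fubini identity to close an inequality involving $F(l)$ alone. Once the pointwise estimate $F(l) \geq l^2/(l+b)^2$ is secured, both conclusions follow by the elementary manipulations indicated above.
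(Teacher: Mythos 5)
Your proof is correct. Note that the paper itself gives no proof of this lemma --- it is quoted verbatim from \cite{FLS} --- so the comparison is with that reference rather than with anything in the present text. Your argument is a clean, self-contained alternative: the Fubini identity $1-F(l)=lG(l)+\int_l^\infty G(r)\,dr$ is exact, the integration by parts $\int_l^\infty bF(r)r^{-2}\,dr=bF(l)/l+bG(l)$ is correct (the boundary term vanishes since $F\leq 1$), and the double application of the hypothesis legitimately closes the loop to give $1\leq F(l)\bigl(1+b(2l+b)/l^2\bigr)=F(l)(l+b)^2/l^2$, hence the pointwise bound $F(l)\geq l^2/(l+b)^2$ for every $l\geq l_0$, not merely at $l=l_0$. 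Feeding this into $\int_0^\infty \rho(r)r^{-1}\,dr=\int_0^\infty F(s)s^{-2}\,ds\geq\int_{l_0}^\infty(s+b)^{-2}\,ds=1/(l_0+b)$ then yields the first conclusion with the stated factor $2$ to spare. So your version actually proves slightly more than the lemma asserts (a pointwise lower bound on the cumulative distribution and the constant $1/(b+l_0)$ in place of $1/(2(b+l_0))$), which is harmless since only the stated weaker form is used in the proof of Theorem \ref{main}. The one point worth making explicit in a written version is that the bootstrap uses the hypothesis at \emph{all} scales $r\geq l\geq l_0$ inside the tail integral, which is exactly what the assumption provides; with that noted, the argument is complete.
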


We now use Lemmas \ref{opineq} and \ref{calc} to give the

\begin{proof}[Proof of Theorem \ref{main}]
Fix $\delta,\theta>0$ and $1\leq K\leq N-1$ and consider $U$'s as in the statement of the theorem. Any normalized ground state $\psi_U$ of $H^{(N)}_U$ satisfies, according to the operator inequality \ref{operineq},
$$
\frac{C}{l^2} \int_{\{|x|_\infty<l\}} |\psi_U|^2 \,dx
\geq c \int_{\{|x|_\infty\geq l\}} \frac{|\psi_U|^2}{|x|_\infty} \,dx
$$
for all $l\geq l_0$ and some constants $c,C,l_0>0$. (Here we also used the fact that $E_U^{(N-1)}\geq E^{(N)}_U$.) Lemma \ref{calc} now implies that
$$
\int_{\R^{3N}} \frac{|\psi_U|^2}{|x|_\infty} \,dx \geq \frac{c}{2(C + l_0 c)}
\quad\text{and}\quad
\int_{\{|x|_\infty<l_0\}} |\psi_U|^2 \,dx \geq \frac{c^2 l_0^2}{(C + l_0 c)^2} \,.
$$
This proves \eqref{eq:main}.
\end{proof}

To summarize the content of this section, we have reduced the proof of our main result to the proof of the operator inequality in Lemma \ref{opineq}. This will be accomplished in Section \ref{sec:opineq} after some preparations in Section \ref{sec:ims}.

%%%%%%%%%%%%%%%%%%%%%%%%%%%%%%%%%%%%%%%
%%%%%%%%%%%%%%%%%%%%%%%%%%%%%%%%%%%%%%%

\section{A partition of unity}\label{sec:ims}

Our goal in this section is to construct a partition of unity in $\R^{3N}$ with an effective control on the Coulomb interaction between (subsets of) the `particles' (electrons) $x_1,\ldots,x_N\in\R^3$ and another `particle' (nucleus) at the origin. Throughout this section we fix an integer $N\geq 2$.

In order to formulate the properties of our partition of unity we need to introduce some notation. For every integer $1\leq k\leq N$ we denote by $\mathcal J_k$ the collection of all integer sequences $(J_1,\ldots,J_k)$, where $1\leq J_l\leq N$ for all $l$ and where all the $J_l$'s are mutually distinct. For given $J\in\mathcal J_k$ and $x\in\R^{3N}$ we denote by $\hat x_J$ the vector in $\R^{3(N-k)}$ which coincides with $x$, but where the entries $x_{J_1},\ldots,x_{J_k}$ have been erased. Thus, if we denote as usual $|x|_{\infty}:=\max\{|x_1|,..., |x_N|\}$, then
$$
|\hat x_J|_\infty = \max\{ |x_j|: j \neq J_1,\ldots,J_k \} \,.
$$
One can think of $|\hat x_J|_\infty$ as the maximum distance of the $N-k$ electrons from the nucleus after having removed the $k$ electrons with indices in $J$.

Finally, if $\pi\in\mathcal S_N$ is a permutation and $x\in\R^{3N}, J\in\mathcal J_k$, we set
$$
x_\pi = (x_{\pi(1)},\ldots,x_{\pi(N)})
\quad\text{and}\quad \pi(J) = (\pi(J_1),\ldots,\pi(J_k)) \,.
$$

Here is the description of our partition of unity.

\begin{prop}[Partition of unity] \label{partition}
For any integer $1\leq K\leq N-1$ and any $0<\epsilon<1/2$ there is a constant $c_\epsilon>0$ with the following property. For any $l>0$ there is a quadratic partition of unity, 
$$
\Lambda_0(x)^2 + \sum_{j=1}^N \Lambda_j(x)^2 + \sum_{J\in\mathcal J_{K+1}} \Lambda_{J} (x)^2=1
\qquad\text{for all}\ x\in \R^{3N} \,,
$$
with
$$
\Lambda_0(x)= 0 \ \text{unless} \ |x|_{\infty}\leq l
$$
and, for $1\leq j\leq N$,
\begin{align*}
\Lambda_j(x)= 0 \ \text{unless}\ \frac{1}{N-K}\sum_{m\neq j}\frac{1}{|x_m-x_{j}|} \geq\frac{1-2\epsilon}{|x_{j}|}
\end{align*}
and, for $J=(J_1,\ldots,J_{K+1})\in\mathcal J_{K+1}$,
\begin{align*}
\Lambda_J(x)= 0 \ \text{unless}\ & |x|_{\infty} \geq \frac{l}{2}\,, \\
& |x_{J_n}| \geq \frac{1}{2} |\hat x_{J_1,J_2,\ldots,J_{n-1}}|_\infty
\quad \text{for all}\ 1\leq n\leq K+1 \,, \\
& \frac{1}{N-n}\sum_{m\neq J_1,\ldots,J_{n}}\frac{1}{|x_m-x_{J_n}|} \leq\frac{1-\epsilon}{|x_{J_{n}}|} 
\quad \text{for all}\ 1\leq n\leq K \,.
\end{align*}
Moreover, we have
$$
|\nabla\Lambda_0(x)|^2 + \sum_{j=1}^{N} |\nabla\Lambda_j (x)|^2 + \sum_{J\in\mathcal J_{K+1}} |\nabla\Lambda_{J} (x)|^2
\leq \frac{c_{\epsilon}}{l^2}
\qquad\text{if}\ \Lambda_{0}(x)>0
$$
and
$$
|\nabla\Lambda_0(x)|^2 + \sum_{j=1}^{N} |\nabla\Lambda_j (x)|^2 + \sum_{J\in\mathcal J_{K+1}} |\nabla\Lambda_{J} (x)|^2
 \leq \frac{c_{\epsilon}}{l |x|_\infty}
\qquad\text{if}\ \Lambda_{0}(x)<1 \,.
$$
Finally, the $\Lambda$'s behave as follows under permutations $\pi\in\mathcal S_N$.
\begin{align*}
& & \Lambda_0(x_\pi)=\Lambda_0(x) & \,, \\
& \text{for}\ 1\leq j\leq N : & \Lambda_j(x_\pi)=\Lambda_j(x) & \ \text{if}\ \pi(j)=j \,,\\
& \text{for}\ J\in\mathcal J_{K+1}: & \Lambda_J(x_\pi)=\Lambda_J(x) & \ \text{if}\ \pi(J)=J \,.
\end{align*}
\end{prop}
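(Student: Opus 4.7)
The plan is to construct the partition by an iterative peeling procedure that inspects the electrons in order of decreasing distance from the nucleus, at each step branching on whether the inspected electron is strongly repelled (terminating the recursion in some $\Lambda_j$) or relatively bound (continuing toward $\Lambda_J$). Fix a smooth $\phi:\R\to[0,1]$ with $\phi\equiv 1$ on $(-\infty,0]$ and $\phi\equiv 0$ on $[1,\infty)$, and set $\psi=\sqrt{1-\phi^2}$. Fix a smooth fully permutation-symmetric approximation $M(x)$ to $|x|_\infty$ with $|x|_\infty\le M(x)\le(1+\eta)|x|_\infty$ for $\eta$ small, and analogously $M_S(x)$ on the complement of an index set $S$. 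Define $\Lambda_0(x)=\phi(2M(x)/l-1)$, which is supported in $\{|x|_\infty\le l\}$ and satisfies $|\nabla\Lambda_0|\le C/l$; let $\Xi(x)=\sqrt{1-\Lambda_0(x)^2}$.

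At iteration $n\ge 1$, with distinct indices $J_1,\ldots,J_{n-1}$ already chosen, introduce a smooth \emph{selection factor} $\omega^{(n)}_{J_n}$ for each admissible next index $J_n\notin\{J_1,\ldots,J_{n-1}\}$ by normalizing $\phi\bigl(M_{\{J_1,\ldots,J_{n-1}\}}(x)/|x_{J_n}|-1\bigr)$, so that $\sum_{J_n}(\omega^{(n)}_{J_n})^2=1$ on its domain and each is supported where $|x_{J_n}|\ge\tfrac12|\hat x_{J_1,\ldots,J_{n-1}}|_\infty$. For $n\le K$, follow with a smooth \emph{branching switch} $\alpha_n^2+\beta_n^2\equiv 1$ taken as a function of
\[
q_n(x):=\frac{|x_{J_n}|}{N-n}\sum_{m\notin\{J_1,\ldots,J_n\}}\frac{1}{|x_m-x_{J_n}|},
\]
with $\alpha_n$ supported in $\{q_n\ge 1-5\epsilon/3\}$ and $\beta_n$ in $\{q_n\le 1-4\epsilon/3\}$. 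The $\alpha_n$ branch terminates the iteration and its path-product-squared is added into $\Lambda_{J_n}^2$; the $\beta_n$ branch continues. At $n=K+1$ no branching switch is applied and the resulting path-product defines $\Lambda_J$ for $J=(J_1,\ldots,J_{K+1})$.

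Two short elementary estimates drive the entire proof. First, any $\alpha_n$ exit indeed lands in the advertised support of $\Lambda_{J_n}$: if $q_n\ge 1-5\epsilon/3$, then using $\sum_{m\ne J_n}\ge\sum_{m\notin\{J_1,\ldots,J_n\}}$ and $(N-n)/(N-K)\ge 1$ one gets $\frac{|x_{J_n}|}{N-K}\sum_{m\ne J_n}|x_m-x_{J_n}|^{-1}\ge 1-5\epsilon/3>1-2\epsilon$, which is precisely the $\Lambda_{J_n}$ support condition. Second, $\beta_m>0$ implies $|\hat x_{J_1,\ldots,J_m}|_\infty\ge(4\epsilon/3)|x_{J_m}|$: one rearranges $\sum_{k\notin\{J_1,\ldots,J_m\}}|x_k-x_{J_m}|^{-1}\le(1-4\epsilon/3)(N-m)/|x_{J_m}|$ using the lower bound $|x_k-x_{J_m}|^{-1}\ge(|x_{J_m}|+|\hat x_{J_1,\ldots,J_m}|_\infty)^{-1}$. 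Combining with the selection lower bound $|x_{J_{m+1}}|\ge\tfrac12|\hat x_{J_1,\ldots,J_m}|_\infty$ and iterating along the $\beta$-chain, on the support of any path-factor one has $|x_{J_n}|\ge(2\epsilon/3)^{n-1}|x_{J_1}|\ge c_\epsilon|x|_\infty$. Telescoping $\alpha_n^2+\beta_n^2=1$ and $\sum_{J_n}(\omega^{(n)}_{J_n})^2=1$ produces the partition identity $\Lambda_0^2+\sum_j\Lambda_j^2+\sum_J\Lambda_J^2=1$; permutation invariance is manifest because each factor depends only on $x_{J_1},\ldots,x_{J_n}$ and on symmetric combinations of the remaining coordinates (through $M_S$ and sums over $m\notin S$).

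For the gradient bounds, the chain rule gives $|\nabla\omega^{(n)}_{J_n}|\le C/|x_{J_n}|$ (transition at scale $|x_{J_n}|$) and $|\nabla\alpha_n|,|\nabla\beta_n|\le C_\epsilon/|x_{J_n}|$: on the transition region $q_n\in[1-5\epsilon/3,1-4\epsilon/3]$, each individual Coulomb term $|x_m-x_{J_n}|^{-1}$ is forced to be $\le C/|x_{J_n}|$ (otherwise $q_n$ would overshoot), so $|\nabla q_n|\le C/|x_{J_n}|$. Combined with the uniform lower bound $|x_{J_n}|\ge c_\epsilon|x|_\infty$, Leibniz-expanding each path-product (of at most $2(K+1)$ factors) and using the Cauchy--Schwarz inequality $|\nabla\sqrt{\sum F_i^2}|^2\le\sum|\nabla F_i|^2$ for the sum-of-squares definition of $\Lambda_j$, one obtains $\sum_\bullet|\nabla\Lambda_\bullet|^2\le C_\epsilon/|x|_\infty^2$ on $\{\Lambda_0<1\}$, which is $\le 2C_\epsilon/(l|x|_\infty)$ using $|x|_\infty\ge l/2$, while on $\{\Lambda_0>0\}$ the same quantity is $\le C_\epsilon/l^2$ using $|x|_\infty\le l$ (with $|\nabla\Lambda_0|^2\le C/l^2$ absorbed). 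The main obstacle I anticipate is not a single difficult step but the careful bookkeeping of compound smooth cutoffs through $K+1$ iterations: verifying the two elementary inequalities with the actual smoothings in place, ensuring that normalizations of the selection factors preserve both smoothness and the required symmetry, and propagating the chain lower bound $|x_{J_n}|\ge c_\epsilon|x|_\infty$ uniformly through the Leibniz expansion of the gradients.
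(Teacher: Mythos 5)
Your construction is essentially the paper's own proof: the authors build exactly this telescoping scheme, with $\Lambda_0=f(|x|_\infty/l)$, normalized selection factors $\tilde f\bigl(|x_{J_n}|/|\hat x_{J_1,\ldots,J_{n-1}}|_\infty\bigr)\big/\bigl(\sum_j\tilde f(\cdot)^2\bigr)^{1/2}$ playing the role of your $\omega^{(n)}_{J_n}$, branching switches $g,\tilde g$ of the Coulomb ratio playing the role of your $\alpha_n,\beta_n$, the coarse $\Lambda_j$ obtained by summing the squares of the terminated paths to restore the permutation symmetry, and the same two geometric facts (Lemmas \ref{imp2} and \ref{farout}) and transition-region gradient estimates. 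The only differences are cosmetic (a smooth symmetric surrogate $M$ for the Lipschitz $|x|_\infty$, shifted $\epsilon$-thresholds), apart from one slip in the last line: on $\{\Lambda_0>0\}$ the bound $C_\epsilon/|x|_\infty^2\le C_\epsilon'/l^2$ follows from the \emph{lower} bound $|x|_\infty\ge l/2$ (up to your $\eta$) on the set where the gradients do not vanish identically, not from the upper bound $|x|_\infty\le l$, which points the wrong way.
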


Here, for $n=1$ we set $\hat x_{J_1,J_2,\ldots,J_{n-1}}=x$, that is, the condition $|x_{J_n}| \geq \frac{1}{2} |\hat x_{J_1,J_2,\ldots,J_{n-1}}|_\infty$ for $n=1$ means $|x_{J_1}| \geq \frac{1}{2} |x|_\infty$.

Before proving this proposition we record some useful geometric facts.

\begin{lem}\label{imp2}
Let $1\leq n\leq N-1$, $(J_1,\ldots,J_n)\in \mathcal J_n$ and assume that $x\in\R^{3N}$ satisfies
$$
\frac{1}{N-n}\sum_{m\neq J_1,\ldots,J_{n}}\frac{1}{|x_m-x_{J_n}|} \leq\frac{1-\epsilon}{|x_{J_{n}}|} \,.
$$
Then
\begin{equation}
\label{eq:lem1}
\min_{m\neq J_1,\ldots,J_n} |x_m-x_{J_n}| \geq \frac{|x_{J_n}|}{(N-n)(1-\epsilon)}
\end{equation}
and
\begin{equation}
\label{eq:lem2}
|\hat x_{J_{1},J_{2},..,J_{n}}|_{\infty} \geq \frac{\epsilon}{1-\epsilon}|x_{J_n}| \,.
\end{equation}
\end{lem}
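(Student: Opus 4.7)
The plan is to deduce both inequalities directly from the hypothesis by elementary manipulations, using the observation that the sum $\sum_{m\neq J_1,\ldots,J_n} |x_m-x_{J_n}|^{-1}$ can be bounded below in two different ways: by its largest term (to extract a bound on the minimum distance), or by the uniform lower bound coming from the triangle inequality (to extract a bound on $|\hat x_{J_1,\ldots,J_n}|_\infty$).

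For \eqref{eq:lem1}, I would let $m^*$ be an index with $m^*\neq J_1,\ldots,J_n$ realizing the minimum of $|x_m-x_{J_n}|$; equivalently, $1/|x_{m^*}-x_{J_n}|$ is the largest term in the sum on the left-hand side of the hypothesis. Since the remaining $N-n-1$ terms are non-negative, the single term already gives
\[
\frac{1}{|x_{m^*}-x_{J_n}|} \leq \sum_{m\neq J_1,\ldots,J_n} \frac{1}{|x_m-x_{J_n}|} \leq (N-n)\,\frac{1-\epsilon}{|x_{J_n}|},
\]
and rearranging yields \eqref{eq:lem1}.

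For \eqref{eq:lem2}, I would instead bound every term in the sum from below using the triangle inequality: for each $m\neq J_1,\ldots,J_n$,
\[
|x_m-x_{J_n}| \leq |x_m| + |x_{J_n}| \leq |\hat x_{J_1,\ldots,J_n}|_\infty + |x_{J_n}|.
\]
Summing the $N-n$ reciprocals and applying the hypothesis gives
\[
\frac{N-n}{|\hat x_{J_1,\ldots,J_n}|_\infty + |x_{J_n}|} \leq \sum_{m\neq J_1,\ldots,J_n} \frac{1}{|x_m-x_{J_n}|} \leq (N-n)\,\frac{1-\epsilon}{|x_{J_n}|},
\]
which is equivalent to $|\hat x_{J_1,\ldots,J_n}|_\infty + |x_{J_n}| \geq |x_{J_n}|/(1-\epsilon)$, i.e.\ \eqref{eq:lem2}.

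Neither step presents a real obstacle; the lemma is essentially a bookkeeping observation about what the assumption on the Coulomb-like sum forces. The only point that requires a little care is simply noting that the sum contains the term $1/|x_{m^*}-x_{J_n}|$ for the first part, and has exactly $N-n$ terms for the second part, so that the $N-n$ on each side of the inequality cancels cleanly.
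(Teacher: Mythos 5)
Your proof is correct and follows essentially the same route as the paper: the first inequality by comparing the largest summand to the whole sum, the second via the triangle inequality $|x_m-x_{J_n}|\leq |x_m|+|x_{J_n}|$ applied to all $N-n$ terms. The only cosmetic difference is that the paper phrases the second part as a contrapositive while you run the same chain of inequalities directly.
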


\begin{proof}
To prove the first claim it suffices to notice that
$$
\max_{m\neq J_1,\ldots J_n}\frac{1}{|x_m-x_{J_n}|}\leq \sum_{m\neq J_1,\ldots J_n} \frac{1}{|x_m-x_{J_n}|}\leq \frac{(N-n)(1-\epsilon)}{|x_{J_n}|} \,.
$$

To prove the second claim we shall show that $|\hat x_{J_{1},J_{2},..,J_{n}}|_{\infty} < \frac{\epsilon}{1-\epsilon}|x_{J_n}|$ implies $\frac{1}{N-n}\sum_{m\neq J_1,\ldots,J_{n}}\frac{1}{|x_m-x_{J_n}|} >\frac{1-\epsilon}{|x_{J_{n}}|}$. Thus, assume that for all $m \neq  J_1,\ldots,J_n$ one has $|x_m|<\frac{\epsilon}{1-\epsilon}|x_{J_n}|$. By the triangle inequality we get
$$
|x_m-x_{J_n}|\leq |x_m|+|x_{J_n}|<\left(\frac{\epsilon}{1-\epsilon}+1\right)|x_{J_n}|=\frac{1}{1-\epsilon}|x_{J_n}| \,,
$$
and therefore
$$
\frac{1}{N-n}\sum_{m\neq J_1,\ldots,J_{n}}\frac{1}{|x_m-x_{J_n}|} >\frac{1-\epsilon}{|x_{J_{n}}|} \,,
$$
as claimed.
\end{proof}

The following result is a consequence of the construction of the partition of unity in Proposition \ref{partition} and of Lemma \ref{imp2}. It says that on the support of $\Lambda_J$ with $J\in\mathcal J_{K+1}$, all $K+1$ particles $x_{J_1},\ldots,x_{J_{K+1}}$ are `far out', that is, at a distance comparable to the distance of the particle that is farthest out. This will be useful in the proof of Lemma \ref{opineq}.

\begin{lem}\label{farout}
Let $1\leq k\leq N$ and $(J_1,\ldots,J_k)\in\mathcal J_k$. Assume that $x\in\R^{3N}$ satisfies
\begin{align*}
& |x_{J_n}| \geq \frac{1}{2} |\hat x_{J_1,J_2,\ldots,J_{n-1}}|_\infty
\quad \text{for all}\ 1\leq n\leq k \,, \\
& \frac{1}{N-n}\sum_{m\neq J_1,\ldots,J_{n}}\frac{1}{|x_m-x_{J_n}|} \leq\frac{1-\epsilon}{|x_{J_{n}}|} 
\quad \text{for all}\ 1\leq n\leq k-1 \,.
\end{align*}
Then
\begin{equation}
\label{eq:farout}
|x_{J_n}| \geq \frac 1{2^n} \left( \frac{\epsilon}{1-\epsilon} \right)^{n-1} |x|_\infty
\end{equation}
for every $1\leq n\leq k$.

In particular, with the notation of Proposition \ref{partition}, let $J\in\mathcal J_{K+1}$ and let $x\in\R^{3N}$ such that $\Lambda_J(x)\neq 0$. Then \eqref{eq:farout} holds for every $1\leq n\leq K+1$.
\end{lem}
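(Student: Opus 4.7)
The plan is to prove the bound \eqref{eq:farout} by induction on $n$, using Lemma \ref{imp2} as the key engine to pass from step $n-1$ to step $n$. The ``In particular'' statement will then follow immediately by reading off the hypotheses on the support of $\Lambda_J$ directly from the definition of the partition of unity in Proposition \ref{partition}.

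For the base case $n=1$, recall the convention $\hat x_{J_1,\ldots,J_0} = x$, so the first hypothesis at $n=1$ is precisely $|x_{J_1}| \geq \frac12 |x|_\infty$, which matches \eqref{eq:farout} since $\frac{1}{2^1}\bigl(\frac{\epsilon}{1-\epsilon}\bigr)^0 = \frac12$. For the inductive step, assume that \eqref{eq:farout} holds at index $n-1 \geq 1$. Since $n-1 \leq k-1$, the second hypothesis of the lemma applies at level $n-1$, and Lemma \ref{imp2} (specifically estimate \eqref{eq:lem2}) yields
\[
|\hat x_{J_1,\ldots,J_{n-1}}|_\infty \geq \frac{\epsilon}{1-\epsilon}\,|x_{J_{n-1}}|.
\]
Combining this with the first hypothesis of the lemma at level $n$ gives
\[
|x_{J_n}| \geq \tfrac{1}{2}|\hat x_{J_1,\ldots,J_{n-1}}|_\infty \geq \frac{1}{2}\cdot\frac{\epsilon}{1-\epsilon}\,|x_{J_{n-1}}|,
\]
and then inserting the inductive hypothesis $|x_{J_{n-1}}| \geq \frac{1}{2^{n-1}}\bigl(\frac{\epsilon}{1-\epsilon}\bigr)^{n-2}|x|_\infty$ produces exactly \eqref{eq:farout} at level $n$. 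This closes the induction for all $1 \leq n \leq k$.

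For the final assertion, fix $J = (J_1,\ldots,J_{K+1}) \in \mathcal J_{K+1}$ and $x$ with $\Lambda_J(x) \neq 0$. By Proposition \ref{partition}, such $x$ satisfies $|x_{J_n}| \geq \frac12 |\hat x_{J_1,\ldots,J_{n-1}}|_\infty$ for $1 \leq n \leq K+1$ and $\frac{1}{N-n}\sum_{m\neq J_1,\ldots,J_n} |x_m - x_{J_n}|^{-1} \leq (1-\epsilon)|x_{J_n}|^{-1}$ for $1 \leq n \leq K$. These are exactly the hypotheses of the first part of the lemma with $k = K+1$, so \eqref{eq:farout} holds for every $1 \leq n \leq K+1$.

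There is no real obstacle here: the lemma is a clean inductive consequence of Lemma \ref{imp2}, with the factor $\frac12$ coming from the ``$J_n$ carries at least half of the remaining maximum'' condition and the factor $\frac{\epsilon}{1-\epsilon}$ coming from \eqref{eq:lem2}. The only point requiring a moment of care is the bookkeeping of indices, namely that \eqref{eq:lem2} is invoked at level $n-1$ (which is why only $n-1 \leq k-1$ instances of the Coulomb-type hypothesis are needed to reach level $n = k$), matching exactly the ranges appearing in both the statement of this lemma and the definition of $\Lambda_J$.
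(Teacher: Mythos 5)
Your proof is correct and follows essentially the same route as the paper: both combine estimate \eqref{eq:lem2} of Lemma \ref{imp2} with the hypothesis $|x_{J_n}| \geq \frac12 |\hat x_{J_1,\ldots,J_{n-1}}|_\infty$ to get $|x_{J_n}| \geq \frac{\epsilon}{2(1-\epsilon)}|x_{J_{n-1}}|$ and then iterate from $|x_{J_1}|\geq\frac12|x|_\infty$. You merely phrase the iteration as a formal induction; the index bookkeeping and the final assertion about the support of $\Lambda_J$ are handled exactly as in the paper.
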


\begin{proof}
By \eqref{eq:lem2} the second assumption yields $|\hat x_{J_{1},J_{2},..,J_{n}}|_{\infty} \geq \frac{\epsilon}{1-\epsilon}|x_{J_n}|$ for all $1\leq n\leq k-1$. Combining this with the first assumption, we get
$$
|x_{J_n}| \geq \frac12 |\hat x_{J_{1},J_{2},..,J_{n-1}}|_{\infty} \geq \frac{\epsilon}{2(1-\epsilon)}|x_{J_{n-1}}|
\qquad\text{for all}\ 2\leq n\leq k \,.
$$
Iterating this bound and recalling that $|x_{J_1}|\geq \frac12|x|_\infty$ we obtain claimed inequality.

If $\Lambda_J(x)\neq 0$ for some $J\in\mathcal J_{K+1}$, then the above assumptions are satisfied by the construction of $\Lambda_J$.
\end{proof}

We now turn to the

\begin{proof}[Proof of Proposition \ref{partition}]
We begin by constructing a finer partition of unity,
$$
\Lambda_0(x)^2 + \sum_{k=1}^{K+1} \sum_{J\in\mathcal J_k} \Lambda_{J} (x)^2=1
\qquad\text{for all}\ x\in \R^{3N} \,.
$$
The terms $\Lambda_0$ and $\Lambda_J$ with $J\in\mathcal J_{K+1}$ are as in the statement of the proposition. For $J=(J_1,\ldots,J_k)\in\mathcal J_k$ with $1\leq k\leq K$, we will have
\begin{align*}
\Lambda_J(x)= 0 \ \text{unless}\ & |x|_{\infty} \geq \frac{l}{2}\,, \\
& |x_{J_n}| \geq \frac{1}{2} |\hat x_{J_1,J_2,\ldots,J_{n-1}}|_\infty
\quad \text{for all}\ 1\leq n\leq k \,, \\
& \frac{1}{N-n}\sum_{m\neq J_1,\ldots,J_{n}}\frac{1}{|x_m-x_{J_n}|} \leq\frac{1-\epsilon}{|x_{J_{n}}|} 
\quad \text{for all}\ 1\leq n\leq k-1 \,, \\
& \frac{1}{N-k}\sum_{m\neq J_1,\ldots,J_{k}}\frac{1}{|x_m-x_{J_k}|} \geq\frac{1-2\epsilon}{|x_{J_k}|} \,.
\end{align*}
We will also prove that
\begin{equation}
\label{eq:grad1}
|\nabla\Lambda_0(x)|^2 + \sum_{k=1}^{K+1} \sum_{J\in\mathcal J_k} |\nabla\Lambda_{J} (x)|^2 \leq \frac{c_{\epsilon}}{l^2}
\qquad\text{if}\ \Lambda_{0}(x)>0
\end{equation}
and
\begin{equation}
\label{eq:grad2}
|\nabla\Lambda_0(x)|^2 + \sum_{k=1}^{K+1} \sum_{J\in\mathcal J_k}  |\nabla\Lambda_{J} (x)|^2 \leq \frac{c_{\epsilon}}{l |x|_\infty}
\qquad\text{if}\ \Lambda_{0}(x)<1 \,.
\end{equation}
Moreover, the refined partition of unity has the symmetry property
\begin{equation}
\label{eq:symm}
\Lambda_{J}(x_\pi)=\Lambda_{\pi(J)}(x)
\end{equation}
for all $J\in\mathcal J_{k}$, $1\leq k\leq K+1$, and $\Lambda_0(x_\pi)=\Lambda_0(x)$.

We now explain how to obtain the asserted partition of unity from the refined one. We keep the terms $\Lambda_0$ and $\Lambda_J$ with $J\in\mathcal J_{K+1}$. Moreover, for $1\leq j\leq N$ we define
$$
\Lambda_j(x) = \sqrt{\sum_{k=1}^K \sum_{J\in \mathcal J_k,\, J_k=j} \Lambda_J(x)^2 }
$$
Thus, if $\Lambda_j(x)\neq 0$, then there is a $1\leq k\leq K$ and a $J\in\mathcal J_k$ with $J_k = j$ such that
$$
\frac{1}{N-k}\sum_{m\neq J_1,\ldots,J_{k}}\frac{1}{|x_m-x_{j}|} \geq\frac{1-2\epsilon}{|x_{j}|} \,.
$$
But this inequality implies that also
$$
\frac{1}{N-K}\sum_{m\neq j}\frac{1}{|x_m-x_{j}|} \geq\frac{1-2\epsilon}{|x_{j}|} \,,
$$
which is the asserted condition for $\Lambda_j$. Moreover, by the Schwarz inequality,
$$
|\nabla \Lambda_j| \leq \sqrt{\sum_{k=1}^K \sum_{J\in \mathcal J_k,\, J_k=j} |\nabla \Lambda_J|^2 } \,.
$$
Therefore, the gradient bounds for the coarser partition of unity follow from those for the refined one.

The reason why we pass from the refined partition to the one stated in the proposition is that the latter satisfies the required symmetry assumptions. This is an immediate consequence of \eqref{eq:symm}.

Thus, it remains to construct the refined partition of unity. Let us start by introducing the following functions $f,g:\R^+\rightarrow\R^+$,
\begin{align*}      
   f(s) & =
   \begin{cases}
   1 & \text{ if } 0< s\leq \frac{1}{2} \,, \\
   \cos(\pi (s-\frac{1}{2})) & \text{ if } \frac{1}{2}< s\leq 1 \,, \\
   0 & \text{ if } s\geq 1 \,,
   \end{cases} \\
   g(s) & =
   \begin{cases}
   0 & \text{ if } 0< s\leq 1-2\epsilon \,, \\
   \sin(\pi \frac{(s-(1-2\epsilon))}{2\epsilon}) & \text{ if  } 1-2\epsilon<s<1-\epsilon \,, \\
   1 & \text{ if }s\geq 1-\epsilon \,,
    \end{cases}
\end{align*}
and the corresponding $\tilde f,\tilde g:\R^+\rightarrow\R^+$ fulfilling  $f^2(s)+\tilde f^2(s)=g^2(s)+\tilde g^2(s)=1$ for all $s\geq 0$. With this notation we define
$$
\Lambda_0(x)=f\left(\frac{|x|_{\infty}}{l}\right) \,,
$$
For $J=(J_1,\ldots,J_k)\in\mathcal J_k$ with $1\leq k\leq K$ we set
\begin{align*}
\Lambda_J(x) = & \tilde f \left( \frac{|x|_{\infty}}{l} \right) \\
& \times \prod_{n=1}^{k-1} 
\left( 
\frac{\tilde f\left(\frac{|x_{J_n}|}{|\hat x_{J_1,\ldots,J_{n-1}}|_\infty}\right)}{\sqrt{ \sum_{j\neq J_1,\ldots,J_{n-1}} \tilde f\left(\frac{|x_{j}|}{|\hat x_{J_1,\ldots,J_{n-1}}|_\infty}\right)^2}}\ 
\tilde g\left( \frac{1}{N-n} \sum_{m\neq J_1,\ldots,J_n} \frac{|x_{J_n}|}{|x_m-x_{J_n}|} \right) 
\right) \\
& \times 
\frac{\tilde f\left(\frac{|x_{J_k}|}{|\hat x_{J_1,\ldots,J_{k-1}}|_\infty}\right)}{\sqrt{ \sum_{j\neq J_1,\ldots,J_{k-1}} \tilde f\left(\frac{|x_{j}|}{|\hat x_{J_1,\ldots,J_{k-1}}|_\infty}\right)^2}}\  
g\left( \frac{1}{N-k} \sum_{m\neq J_1,\ldots,J_k} \frac{|x_{J_k}|}{|x_m-x_{J_k}|} \right) \,.
\end{align*}
For $J=(J_1,\ldots,J_{K+1})\in\mathcal J_{K+1}$ we set
\begin{align*}
\Lambda_J(x) = & \tilde f \left( \frac{|x|_{\infty}}{l} \right) \\
& \times \prod_{n=1}^K 
\left( 
\frac{\tilde f\left(\frac{|x_{J_n}|}{|\hat x_{J_1,\ldots,J_{n-1}}|_\infty}\right)}{\sqrt{ \sum_{j\neq J_1,\ldots,J_{n-1}} \tilde f\left(\frac{|x_{j}|}{|\hat x_{J_1,\ldots,J_{n-1}}|_\infty}\right)^2}}\  
\tilde g\left( \frac{1}{N-n} \sum_{m\neq J_1,\ldots,J_n} \frac{|x_{J_n}|}{|x_m-x_{J_n}|} 
\right) \right) \\
& \times \frac{\tilde f\left(\frac{|x_{J_{K+1}}|}{|\hat x_{J_1,\ldots,J_{K}}|_\infty}\right)}{\sqrt{ \sum_{j\neq J_1,\ldots,J_{K}} \tilde f\left(\frac{|x_{j}|}{|\hat x_{J_1,\ldots,J_{K}}|_\infty}\right)^2}} \,.
\end{align*}

It is straightforward but somewhat tedious to verify that this defines indeed a quadratic partition of unity. One way to verify this is to begin to sum $\Lambda_J(x)^2$ over all $J\in \mathcal J_{K+1}$ whose $K$ first entries $(J_1,\ldots,J_K)$ agree with some given element in $\mathcal J_K$. After having performed this sum, we sum $\Lambda_J(x)^2$ over all $J\in \mathcal J_{K+1}$ and all $J\in\mathcal J_K$ whose $K-1$ first entries $(J_1,\ldots,J_{K-1})$ agree with some given element in $\mathcal J_{K-1}$. This sum can be simplified using $g^2 + \tilde g^2=1$. Proceeding in this way, we obtain the claimed partition of unity property.

The claimed support conditions and the symmetry condition \eqref{eq:symm} follow immediately. We also observe that the terms in the denominators are positive, since for any $x\in\R^{3N}$ and any $(J_1,\ldots,J_{n-1})\in\mathcal J_{n-1}$ there is a $j\neq J_1,\ldots,J_{n-1}$ such that $|x_j|=|\hat x_{J_1,\ldots,J_{n-1}}|_\infty$ and therefore, since $\tilde f(1)=1$,
$$
\sum_{j\neq J_1,\ldots,J_{n-1}} \tilde f\left(\frac{|x_{j}|}{|\hat x_{J_1,\ldots,J_{n-1}}|_\infty}\right)^2 \geq 1 \,.
$$

Now we prove the gradient estimates \eqref{eq:grad1} and \eqref{eq:grad2}. The functions $f$, $\tilde f$, $g$, $\tilde g$ and $|\cdot|_\infty$ are Lipschitz, and therefore $\Lambda_J$ is so as well and it suffices to prove the gradient bounds only in subsets where the numbers $|x_1|, \ldots,|x_N|$ are pairwise distinct. Moreover, since all factors in the definition of $\Lambda_J$ are bounded, it suffices to prove that each factor individually satisfies the claimed gradient bounds.

We begin with the term $\tilde f(|x|_\infty/l)$. Differentiating we get
\begin{equation*}      
 \nabla_i \left( \tilde{f}\left(\frac{|x|_{\infty}}{l}\right) \right) = 
 \begin{cases}
 \frac1l \tilde f'(|x|_\infty/l) \frac{x_i}{|x_i|}
 & \text{ if } |x|_{\infty} = |x_i| \,,\\
 0
 & \text{ if } |x|_{\infty}\neq |x_i| \,.
 \end{cases}
\end{equation*}
This is obviously bounded by
$$
\left| \nabla_i \left( \tilde{f}\left(\frac{|x|_{\infty}}{l}\right) \right) \right| \leq \frac{\pi}{l} \,.
$$
Moreover, since $\tilde f'$ is supported in $[1/2,1]$, we can bound $|x|_\infty\leq l$ on the support of $\nabla_i \left( \tilde{f}\left(\frac{|x|_{\infty}}{l}\right)\right)$ and find
$$
\left| \nabla_i \left( \tilde{f}\left(\frac{|x|_{\infty}}{l}\right) \right) \right| \leq \frac{\pi}{\sqrt{l |x|_\infty}} \,.
$$
The argument for the term $f(|x|_\infty/l)$ is the same.

Analogously, we have
\begin{equation*}      
 \nabla_i \left( \tilde{f}\left(\frac{|x_{J_n}|}{|\hat x_{J_1,\ldots,J_{n-1}}|_{\infty}}\right) \right) = 
 \begin{cases}
 \frac1{|\hat x_{J_1,\ldots,J_{n-1}}|_{\infty}} \tilde f'(\cdot) \frac{x_i}{|x_i|}
 & \text{ if } i=J_n \text{ and } |\hat x_{J_1,\ldots,J_{n-1}}|_{\infty} \neq |x_i| \,,\\
 - \frac{|x_{J_n}|}{|\hat x_{J_1,\ldots,J_{n-1}}|_{\infty}^2}
 \tilde f'(\cdot) \frac{x_i}{|x_i|}
  & \text{ if } i\neq J_n \text{ and } |\hat x_{J_1,\ldots,J_{n-1}}|_{\infty} = |x_i| \,,\\
 0
 & \text{ if } i\neq J_n \text{ and } |\hat x_{J_1,\ldots,J_{n-1}}|_{\infty} \neq |x_i| \,.
 \end{cases}
\end{equation*}
Thus,
$$
\left| \nabla_i \left( \tilde{f}\left(\frac{|x_{J_n}|}{|\hat x_{J_1,\ldots,J_{n-1}}|_{\infty}}\right) \right) \right| \leq \frac{\pi}{|\hat x_{J_1,\ldots,J_{n-1}}|_{\infty}} \leq \frac{\pi}{|x_{J_n}|} \,.
$$
On the support of $\Lambda_J$ (with $J\in\mathcal J_k$ for some $1\leq k\leq K+1$) the assumptions of Lemma \ref{farout} are satisfied and we can bound $|x_{J_n}|^{-1}$ from above in terms of $|x|_\infty^{-1}$. Thus,
\begin{align*}
\left| \nabla_i \left( \tilde{f}\left(\frac{|x_{J_n}|}{|\hat x_{J_1,\ldots,J_{n-1}}|_{\infty}}\right) \right) \right| \leq \frac{c_\epsilon'}{|x_{J_n}|}
\,.
\end{align*}
Because of the factor $\tilde f(|x|_\infty/l)$ contained in $\Lambda_J$, we may assume that $|x|_\infty \geq l/2$. This allows us to replace the term $|x|_\infty^{-1}$ in the above bound by either $2 l^{-1}$ or $(2/l|x|_\infty)^{1/2}$. This yields the claimed bound.

Finally, we compute
\begin{align*}
& \nabla_i \left( \tilde g\left( \frac{1}{N-n} \sum_{m\neq J_1,\ldots,J_n} \frac{|x_{J_n}|}{|x_m-x_{J_n}|} \right) \right) \\
& = \begin{cases}
\frac{1}{N-n} \tilde g^{'}(\cdot) \left( 
\sum_{m\neq J_1,\ldots,J_n} \frac{1}{|x_m-x_i|} \frac{x_i}{|x_i|}
- \sum_{m\neq J_1,\ldots,J_n} \frac{|x_{i}|}{|x_m-x_{i}|^2} \frac{x_i-x_{m}}{|x_i-x_{m}|^2} \right)
& \text{ if }  i = J_n \,, \\
- \frac{1}{N-n} \tilde g^{'}(\cdot) \frac{|x_{J_n}|}{|x_i-x_{J_n}|^2} \frac{x_i-x_{J_n}}{|x_i-x_{J_n}|}
& \text{ if }  i \neq J_n \,,
\end{cases}
\end{align*}
and find
\begin{align*}
& \left| \nabla_i \left( \tilde g\left( \frac{1}{N-n} \sum_{m\neq J_1,\ldots,J_n} \frac{|x_{J_n}|}{|x_m-x_{J_n}|} \right) \right) \right| \\
& \leq \frac{1}{N-n} \frac{\pi}{2\epsilon} 
\sum_{m\neq J_1,\ldots,J_n} \left( \frac{1}{|x_m-x_{J_n}|}
+ \frac{|x_{J_n}|}{|x_m-x_{J_n}|^2} \right)
\end{align*}
Since $\tilde g'$ has support in $[1-2\epsilon, 1-\epsilon]$, it suffices to bound the gradient on the set where
$$
1-2\epsilon\leq \frac{1}{N-n} \sum_{m\neq J_1,\ldots,J_n} \frac{|x_{J_n}|}{|x_m-x_{J_n}|} \leq 1-\epsilon \,.
$$
By \eqref{eq:lem1} this bound entails $|x_m-x_{J_n}| \geq |x_{J_n}|/ ((N-n)(1-\epsilon))$ for all $m\neq J_1,\ldots, J_m$, and therefore
\begin{align*}
\sum_{m\neq J_1,\ldots,J_n} \!\!\left( \frac{1}{|x_m-x_{J_n}|}
+ \frac{|x_{J_n}|}{|x_m-x_{J_n}|^2} \right)
& \leq \left( 1 + (N-n)(1-\epsilon) \right) \!\!\sum_{m\neq J_1,\ldots,J_n}\! \frac{1}{|x_m-x_{J_n}|} \\
& \leq \left( 1 + (N-n)(1-\epsilon) \right) (1-\epsilon) \frac{1}{|x_{J_n}|} \,.
\end{align*}
We can now argue as before and use Lemma \ref{farout} to bound $|x_{J_n}|^{-1}$ from above in terms of $|x|_\infty^{-1}$. Then, again because of the factor $\tilde f(|x|_\infty/l)$, we may replace $|x|_\infty^{-1}$ in the above bound by either $2 l^{-1}$ or $(2/l|x|_\infty)^{1/2}$. This yields the claimed bound. The proof for $g$ instead of $\tilde g$ is similar and is omitted.
\end{proof}

\begin{remark}\label{symm}
As already explained in the proof, the argument of passing from the refined partition of unity to the one stated in Proposition \ref{partition} is needed to obtain the symmetry properties. Those are needed since in our main theorem \ref{binding} we consider the operator $H^{(N)}_U$ on \emph{anti-symmetric} functions. If, instead, we had considered $H^{(N)}_U$ without any symmetry restrictions, the refined partition would have been sufficient for the proof of our results.
\end{remark}

%%%%%%%%%%%%%%%%%%%%%%%%%%%%%%%%%
%%%%%%%%%%%%%%%%%%%%%%%%%%%%%%%%%

\section{Proof of Lemma \ref{opineq}}\label{sec:opineq}

With the help of the partition of unity that we constructed in the previous section we are now able to give the

\begin{proof}[Proof of  Lemma \ref{opineq}]
Let $1\leq K\leq N-1$, let $\delta,\theta>0$ and assume that $U$ satisfies \eqref{eq:mainass}. Our argument makes use of two additional parameters $\epsilon>0$ and $l>0$ that we will specify later depending on $\delta$ and $\theta$.

We use the partition of unity from Proposition \ref{partition} (with the given parameters $K$, $\epsilon$ and $l$) and the IMS formula (see, e.g., \cite{CyFrKiSi}) to localize the Hamiltonian. That is, we write for any wave function $\psi$
$$
\langle \psi | H_U^{(N)} | \psi \rangle= \sum_{j=0}^N e_j \|\psi_{j}\|^2 + \sum_{J\in\mathcal J_{K+1}} e_J \|\psi_J\|^2
$$
where $\psi_j=\Lambda_j\psi$ and $\psi_J=\Lambda_J\psi$. Here
$$
e_j = \frac{\langle \psi_j | H_U^{(N)} - \sum_{j=0}^N |\nabla\Lambda_j|^2 - \sum_{J\in\mathcal J_{K+1}} |\nabla\Lambda_{J}|^2 | \psi_j \rangle}{\|\psi_{j}\|^2}
$$
and similarly for $e_J$. Our goal is to show lower bounds on $e_j$ and $e_J$.

For $e_0$, namely on the support of $\Lambda_0$, we know from Proposition \ref{partition} that the localization error is bounded by $c_\epsilon/l^2$. Moreover, since $\Lambda_0$ is a symmetric function, $\psi_0$ is anti-symmetric. Thus, bounding $H_U^{(N)}$ from below by $E_U^{(N)}$ we immediately arrive at
$$
e_0 \geq E_U^{(N)} -\frac{c_{\epsilon}}{l^2} \,.
$$

Now let $1\leq j\leq N$. On the support of $\Lambda_j$ the localization error is bounded by $c_\epsilon/(l|x|_\infty)$. Moreover, we split
$$
H_U^{(N)} = H_U^{(N-1)} + p_{j}^2 - \frac{1}{|x_{j}|} + \sum_{m\neq j} \frac{U}{|x_m-x_{j}|} \,,
$$
where $H_U^{(N-1)}$ is obtained from $H_U^{(N)}$ by dropping all terms involving the coordinate $x_{j}$. By construction of $\Lambda_j$, $\psi_j$ is an anti-symmetric function of the variables $\hat x_j$. Thus,
$$
H_U^{(N)} \geq E_U^{(N-1)} - \frac{1}{|x_{j}|} + \sum_{m\neq j} \frac{U}{|x_m-x_{j}|} \,.
$$
If we now use the fact that $\psi_j$ has support where $\Lambda_j\neq 0$, we can further bound
\begin{align*}
e_j & \geq E_U^{(N-1)} + \frac{1}{\|\psi_{j}\|^2} \langle \psi_{j} | \left( - \frac{1}{|x_{j}|} + \sum_{m\neq j} \frac{U}{|x_m-x_{j}|} - \frac{c_{\epsilon}l^{-1}}{|x|_{\infty}} \right) |\psi_{j}\rangle \\
& \geq E_U^{(N-1)} + \frac{1}{\|\psi_{j}\|^2} \langle \psi_{j} | \left( \frac{U(1-2\epsilon)(N-K) - 1}{|x_{j}|} - \frac{c_{\epsilon}l^{-1}}{| x|_{\infty}} \right)|\psi_{j} \rangle \,.
\end{align*}
We now choose $\epsilon>0$ so small that
$$
(N-K)\delta - 2\epsilon \left(1+(N-K)\delta \right) \geq \frac{\delta}{2} \,.
$$
Because of the first assumption on $U$ in \eqref{eq:mainass}, this implies that
$$
U(1-2\epsilon)(N-K) - 1 
\geq \left( \frac{1}{N-K} + \delta \right) (1-2\epsilon) (N-K) - 1  
\geq \frac{\delta}{2} >0 \,,
$$
and therefore
\begin{align*}
e_j \geq E_U^{(N-1)} + \frac{1}{\|\psi_{j}\|^2} \langle \psi_{j} | \frac{\delta/2 - c_\epsilon/l}{|x|_\infty} |\psi_{j} \rangle \,.
\end{align*}

Finally, we bound $e_J$ with $J=(J_1,\ldots,J_{K+1})\in \mathcal J_{K+1}$. As before, on the support of $\Lambda_J$ the localization error is bounded by $c_\epsilon/(l|x|_\infty)$. By construction of $\Lambda_{J}$, $\psi_J$ is an anti-symmetric function of the variables $\hat x_{J_1,\ldots,J_{K+1}}$. By splitting off all the coordinates $x_{J_n}$ with $1\leq n\leq K+1$ we obtain the lower bound
$$
H_U^{(N)} \geq E_U^{(N-K-1)} - \sum_{n=1}^{K+1} \frac{1}{|x_{J_n}|} \,.
$$
On the support of $\Lambda_J$ we can use Lemma \ref{farout} to control the last sum and we obtain
\begin{align*}
e_J & \geq E_U^{(N-K-1)} - \frac{1}{\|\psi_{J}\|^2} \langle \psi_{J} | \left( \sum_{n=1}^{K+1} \frac{1}{|x_{j_n}|} + \frac{c_{\epsilon}l^{-1}}{|x|_{\infty}} \right) |\psi_{J}\rangle \\
& \geq E_U^{(N-K-1)} - \frac{1}{\|\psi_{J}\|^2} \langle \psi_{J} | \frac{A_\epsilon+ c_{\epsilon}l^{-1}}{| x|_{\infty}} |\psi_{J} \rangle
\end{align*}
where $A_\epsilon = \sum_{n=1}^{K+1} 2^n \left(1/\epsilon- 1\right)^{n-1}$. Notice that, since $|x|_\infty\geq l/2$ on the support of $\Lambda_J$, we have there
$$
-\frac{A_\epsilon}{|x|_{\infty}} \geq -(2A_\epsilon + \delta)l^{-1}+ \frac{\delta/2}{|x|_\infty}
$$
and therefore
\begin{align*}
e_J & \geq E_U^{(N-K-1)} - (2A_\epsilon + \delta)l^{-1} - \frac{1}{\|\psi_{J}\|^2} \langle \psi_{J} | \frac{\delta/2 + c_{\epsilon}l^{-1}}{| x|_{\infty}} |\psi_{J} \rangle \,.
\end{align*}
At this point, we choose $l_0>0$ such that
$$
\delta/2 + c_\epsilon/l_0 \geq \delta/4
\quad\text{and}\quad
(2A_\epsilon + \delta)l_0^{-1} \leq \theta \,.
$$
Then, by the second assumption on $U$ in \eqref{eq:mainass},
$$
E_U^{(N-K-1)} - (2A_\epsilon + \delta)l_0^{-1} \geq E_U^{(N-1)}
$$
and therefore
$$
e_J \geq E_U^{(N-1)} - \frac{1}{\|\psi_{J}\|^2} \langle \psi_{J} | \frac{\delta/4}{| x|_{\infty}} |\psi_{J} \rangle \,.
$$
for all $l\geq l_0$.

To summarize, we have shown that for all $1\leq j\leq N$ and all $J\in\mathcal J_{K+1}$ we have
$$
e_j \geq E_U^{(N-1)} + \frac{1}{\|\psi_{J}\|^2} \langle \psi_{J} | \frac{\delta/4}{|x|_\infty} |\psi_{J} \rangle
\quad\text{and}\quad
e_J \geq E_U^{(N-1)} + \frac{1}{\|\psi_{J}\|^2} \langle \psi_{J} | \frac{\delta/4}{|x|_\infty} |\psi_{J} \rangle \,.
$$
for all $l\geq l_0$. Moreover, recall that
$$
e_0 \geq E_U^{(N)} -\frac{c_{\epsilon}}{l^2} \,.
$$
Thus, using the fact that
$$
\Lambda_0(x)^2\leq \theta(l-|x|_{\infty}) \,,
\qquad
1-\Lambda_0(x)^2\geq \theta(|x|_{\infty}-l) \,,
$$
we obtain the claimed operator inequality \eqref{operineq}.
\end{proof}

%%%%%%%%%%%%%%%%%%%%%%%%%%%%%%%%%%%%%%%%%%%%

\end{document}